\documentclass[11pt,letterpaper]{article}
\usepackage[margin=1in]{geometry}

\usepackage{amsmath,amssymb,amsfonts}

\usepackage{amsthm}
\usepackage{bm}
\usepackage{mathrsfs}
\usepackage{xcolor}
\usepackage{graphicx}
\usepackage{cite}
\usepackage[hidelinks]{hyperref}

\usepackage{comment}

\newtheorem{assumption}{Assumption}
\newtheorem{remark}{Remark}
\newtheorem{definition}{Definition}
\newtheorem{lemma}{Lemma}
\newtheorem{theorem}{Theorem}

\title{\bfseries Control Barrier--Value Functions under Partial Observability:\\
Safety Guarantees via Conformal Prediction}

\author{Niloofar Jahanshahi\thanks{Department of Computing Science,
Simon Fraser University, Burnaby, BC, Canada.
\texttt{jahansha@sfu.ca}}
\and
Mo Chen\thanks{Department of Computing Science,
Simon Fraser University, Burnaby, BC, Canada.
\texttt{mochen@cs.sfu.ca}}}
\date{}
\begin{document}
\maketitle

\begin{center}
\small This work was supported by the Canada CIFAR AI Chairs program.
\end{center}

\begin{abstract}
This paper studies safety analysis and controller synthesis for partially observable nonlinear control systems. We extend the control barrier--value function (CBVF) framework, which combines Hamilton--Jacobi reachability and control barrier functions, to settings where full state information is not available and control is based on an estimated state. Given an estimator, we apply conformal prediction to the estimation error and obtain an error bound at a user-chosen miscoverage level. We incorporate this bound into the estimator-space safety analysis and define a CBVF-based safety certificate for partially observable systems. We then derive a finite-horizon probabilistic safety guarantee for the true system state. Finally, we propose a QP-based online safety filter for systems affine in the control and disturbance, whose solution enforces the CBVF safety condition in real time against bounded disturbance. The proposed framework is illustrated on a partially observable obstacle-avoidance case study.
\end{abstract}

\noindent\textbf{Keywords:} Safety-critical control, partially observable systems, control barrier--value functions, Hamilton--Jacobi reachability, conformal prediction.

\section{Introduction}

Control barrier--value functions (CBVFs) connect HJ reachability with CBF-based control \cite{ChoiCDC2021}. CBVFs preserve the main benefit of HJ analysis, namely a formally verified safety certificate obtained from a value function, while also introducing a barrier-style condition suitable for online control synthesis.
%%%%%%%
Although CBVFs offer an attractive combination of offline certification and online implementation, the existing framework largely assumes access to full state information. In many practical systems, however, only partial and noisy measurements are available; therefore, safety must be enforced using output information and an estimator. This challenge has been studied in related safety frameworks, in particular for CBFs and HJ reachability. On the CBF side, some works combine observers with barrier conditions to enforce safety using output measurements \cite{wang2022observercbf}. Other works develop CBF-based controller synthesis for partially observable systems through data-driven or estimator-based constructions \cite{jahanshahi2023po,jahanshahi2020partial}. On the HJ side, prior work has studied output-feedback safety-preserving control using state estimates together with uncertainty-aware safety analysis \cite{yousefi2017outputfeedback}, and has also considered how safety can be maintained during temporary loss of observability \cite{laine2020eyesclosed}. %These results show that partial observability can be incorporated into safety analysis, but the CBVF framework in \cite{ChoiCDC2021} has so far been limited to the full-state setting.

%%%%
In this paper, we propose a control barrier--value function framework for the safety analysis of nonlinear systems under partial observability. Because full state information is unavailable, our approach relies on an estimated state. To handle the resulting estimation error, we use conformal prediction (CP) to calibrate an error bound from data at a user-chosen miscoverage level \cite{ShaferVovkJMLR2008,AngelopoulosBatesFnT2023}. CP is well suited here because it yields finite-sample bounds without requiring an explicit probabilistic error model. By incorporating this calibrated bound into our analysis, we account for the mismatch between the estimated and true states, ensuring the safety certificate remains valid. Based on this uncertainty-aware description, we define a partially observable CBVF in estimator space and derive an online safety filter. Ultimately, our main contribution is this combined framework, which establishes a finite-horizon, high-probability safety guarantee for the true system state.
%%%%%%

%%%%%%%%%%%%%%%%%%%%%%%%%%%%%%%%%%%%%%%%%%%
\section{Background}
\label{sec:background}

\subsection{Problem Formulation: Dynamics and Safety Set}
\label{subsec:problem-formulation}
Consider a nonlinear control system over a backward horizon $[t,0]$, with $t\le 0$:
\begin{equation}
    \dot{x}(s) = f\big(x(s), u(s), d(s)\big),
    \qquad s \in [t,0],
    \label{eq:system-dynamics}
\end{equation}
where $x(s)\in\mathbb{R}^n$ is the state, $u(s)\in U\subset\mathbb{R}^m$ is the control, and $d(s)\in D\subset\mathbb{R}^{n_d}$ is an exogenous disturbance. We assume $U$ and $D$ are compact and convex.

Let $U[t,0]$ and $D[t,0]$ denote the sets of Lebesgue measurable signals with values in $U$ and $D$, respectively. As standard in differential games, the disturbance acts through a non-anticipative strategy $\xi_d:U[t,0]\to D[t,0]$: if two controls agree on $[t,\tau]$, then the induced disturbance signals also agree on $[t,\tau]$. Let $\Xi[t,0]$ denote the set of such strategies.

Safety is encoded by a function $l:\mathbb{R}^n\to\mathbb{R}$ with safe set
\begin{equation}
    \mathcal{L} := \{x\in\mathbb{R}^n : l(x)\ge 0\}.
    \label{eq:safe-set}
\end{equation}
The goal is to choose controls so that $x(s)\in\mathcal{L}$ for all $s\in[t,0]$ despite worst-case admissible disturbances.

\begin{assumption}[Standing regularity]
\label{ass:regularity}
The function $f:\mathbb{R}^n\times U\times D\to\mathbb{R}^n$ is continuous in $(x,u,d)$, uniformly Lipschitz in $x$ over $(u,d)\in U\times D$, and bounded on compact sets. The safety function $l$ is locally Lipschitz and bounded on compact subsets of $\mathbb{R}^n$.
\end{assumption}

%%%% HJ
\subsection{HJ Reachability Value Function}
\label{subsec:hj}
A standard robust safety characterization is given by HJ reachability \cite{TomlinLygerosSastry2000,MitchellBayenTomlin2005,BasarOlsder1999}. For initial condition $x(t)=x$, the safety payoff is defined as
\begin{equation}
    J\big(x,t,u(\cdot),\xi_d[u](\cdot)\big)
    := \min_{s \in [t,0]} l\big(x(s)\big),
    \label{eq:worst-safety-margin}
\end{equation}
and the corresponding value function is defined as
\begin{equation}
    V(x,t)
    :=
    \min_{\xi_d \in \Xi[t,0]}\ \max_{u(\cdot)\in U[t,0]}
    J\big(x,t,u(\cdot),\xi_d[u](\cdot)\big).
    \label{eq:hj-value-function}
\end{equation}
The zero-superlevel set of $V$ 
\begin{equation}
\mathcal{V}(t) := \{x\in\mathbb{R}^n : V(x,t)\ge 0\}
\label{eq:viability-kernel}
\end{equation}
is the finite-horizon viability kernel, i.e., the largest set of states from which safety can be maintained on $[t,0]$ against worst-case disturbances \cite{Aubin1991,MitchellBayenTomlin2005}. 

Under Assumption~\ref{ass:regularity}, $V$ is the viscosity solution of
\begin{equation}
0 = \min\Big( l(x) - V(x,t),\ \partial_t V(x,t) + H\big(x,\nabla_x V(x,t)\big) \Big),
\label{eq:hji-vi}
\end{equation}
with terminal condition $V(x,0)=l(x)$, where
\begin{equation}
H(x,p) := \max_{u\in U}\ \min_{d\in D}\ p\cdot f(x,u,d),
\label{eq:hamiltonian}
\end{equation}
with $p=\nabla_x V(x,t)$.

%%%% CBF
\subsection{Control Barrier Functions}
\label{subsec:cbf}
CBFs enforce safety through pointwise inequalities that can be checked online \cite{AmesTAC2017,AmesCBFSurvey2019}.

Compared with HJ reachability, CBFs are attractive online, but constructing a valid CBF for a general nonlinear system can be difficult. This gap motivates CBVFs, which combine reachability-based certification with barrier-style synthesis \cite{ChoiCDC2021}.

%%% CBVF
\subsection{Control Barrier--Value Functions}
\label{subsec:cbvf}
CBVFs modify the finite-horizon reachability objective so that the resulting value function satisfies a barrier-style variational inequality while verifying the same robust safe set as the HJ value function \cite{ChoiCDC2021}. For $\gamma\ge 0$, the CBVF is defined as
\begin{equation}
    B_\gamma(x,t)
    :=
    \min_{\xi_d \in \Xi[t,0]}\;
    \max_{u(\cdot)\in U[t,0]}\;
    \min_{s\in[t,0]}
    e^{\gamma (s-t)}\, l\big(x(s)\big),
    \label{eq:cbvf-def}
\end{equation}
where $x(\cdot)$ solves \eqref{eq:system-dynamics} with $x(t)=x$ and $d(\cdot)=\xi_d[u](\cdot)$. At $t=0$,
\begin{equation}
    B_\gamma(x,0)=l(x).
    \label{eq:cbvf-terminal}
\end{equation}

Under Assumption~\ref{ass:regularity}, $B_\gamma$ is the unique viscosity solution of
\begin{equation}
\begin{aligned}
0=\min\Big(&\, l(x)-B_\gamma(x,t),\\
&\, \partial_t B_\gamma(x,t)
    + H\big(x,\nabla_x B_\gamma(x,t)\big)
    + \gamma B_\gamma(x,t)\Big),
\end{aligned}
\label{eq:cbvf-vi}
\end{equation}
with terminal condition \eqref{eq:cbvf-terminal}, where the Hamiltonian $H(\cdot,\cdot)$ is defined as in~\eqref{eq:hamiltonian}.
Compared with \eqref{eq:hji-vi}, the extra $\gamma B_\gamma$ term gives a barrier-style condition that supports online filtering \cite{ChoiCDC2021}.

Motivated by \eqref{eq:cbvf-vi}, the safe control set is defined as
\begin{equation}
\begin{aligned}
K_{B_\gamma}(x,t)
\!\!:=\!\! \Big\{
\! u\in U\!\!: &\partial_t B_\gamma(x,t)
\!+ \!\min_{d\in D}\nabla_x B_\gamma(x,t)\!\cdot\! f(x,u,d)\\
&+ \gamma B_\gamma(x,t)\ge 0\Big\}.
\end{aligned}
\label{eq:kb-gamma}
\end{equation}
When $B_\gamma(x,t)\ge 0$, any $u\in K_{B_\gamma}(x,t)$ satisfies the barrier-style condition. For control-affine dynamics, \eqref{eq:kb-gamma} is affine in $u$, so it can be enforced through a quadratic program \cite{ChoiCDC2021,AmesTAC2017}.

\section{Partial Observability with Conformal Prediction Error Bounds}
\label{sec:po}

\subsection{Measurement Model}
\label{subsec:meas}

\textcolor{black}{
Although the plant dynamics $f$ in \eqref{eq:system-dynamics} are known,
the full state $x(s)$ is not available online because not all state
components can be measured. Instead, only partial information about $x(s)$ is available through the
measured output, defined as follows:}
\begin{equation}
    y(s) = h_y\big(x(s)\big) + v(s),
    \label{eq:output}
\end{equation}
where $h_y:\mathbb{R}^n\to\mathbb{R}^q$ is a known output map and $v(s)\in W\subset\mathbb{R}^q$ is bounded measurement noise.

\begin{assumption}[Output regularity and noise bound]
\label{ass:meas}
The map $h_y$ in \eqref{eq:output} is locally Lipschitz on the region of interest. The noise set $W$ is compact, and
\[
\bar v := \sup_{v\in W}\|v\| < \infty.
\]
Hence $\|v(s)\|\le \bar v$ for all $s$.
\end{assumption}

\subsection{Estimator Dynamics as an Analysis Model}
\label{subsec:estimator}

Let $\hat x(s)\in\mathbb{R}^n$ denote the estimator state. We model the estimator by
\begin{equation}
    \dot{\hat x}(s) = \hat f\big(\hat x(s),u(s),\hat d(s)\big),
    \qquad \hat d(s)\in \hat D(s),
    \label{eq:estimator-dynamics}
\end{equation}
where $\hat d$ collects the effect of measurement noise, plant disturbances as seen by the estimator, and modeling mismatch. \textcolor{black}{We assume that an estimator of the form
\eqref{eq:estimator-dynamics}, with known $\hat f$, is available.
The construction of the estimator is beyond the scope of this paper; several approaches
to estimator design for nonlinear systems have been developed in the
literature; see, e.g., the results in
\cite{Gauthier1992,KhalilPraly2014}.}

\begin{assumption}[Estimator-model regularity]
\label{ass:estimator}
The function $\hat f$ in \eqref{eq:estimator-dynamics} is continuous in $(\hat x,u,\hat d)$ and locally Lipschitz in $\hat x$, uniformly over $(u,\hat d)$ on the region of interest.
\end{assumption}

 The framework does not require a specific estimator structure; any estimator whose behavior can be conservatively captured by \eqref{eq:estimator-dynamics} is suitable.

%%%%%%% 

\subsection{Conformal Prediction Error Sets}
\label{subsec:cp}

Define the estimation error as $\varepsilon(s):=x(s)-\hat x(s)$. Let $\{t_k\}_{k=0}^N$ be sampling times on $[t,0]$, with $t_0=t$ and $t_N=0$, and write $\varepsilon_k:=\varepsilon(t_k)$.

\begin{assumption}[Exchangeability of rollouts]
\label{ass:exchangeability}
We have access to $M$ calibration rollouts. Rollout $j$ produces an error sequence $\{\varepsilon_k^{(j)}\}_{k=0}^N$ on the sampling grid. \textcolor{black}{
These $M$ calibration rollouts and a future online rollout, generated
under the same closed-loop data-generating process, are exchangeable as
random elements of $\mathbb{R}^{n\times(N+1)}$.
}
\end{assumption}

\textcolor{black}{
Although the true state is unavailable during online deployment, we assume that full-state information is available during offline calibration, allowing the estimation errors $\varepsilon_k^{(j)}=x_k^{(j)}-\hat x_k^{(j)}$ to be computed. Accordingly, the present framework is limited to settings in which such full-state calibration data are available offline.} 

To obtain a uniform bound over the sampling grid, we define the rollout-level score
\begin{equation}
    S^{(j)} := \max_{k=0,\dots,N}\big\|\varepsilon_k^{(j)}\big\|,
    \qquad j=1,\dots,M.
    \label{eq:rollout-score}
\end{equation}
Let $S_{(1)}\le \cdots \le S_{(M)}$ denote the sorted scores. For miscoverage level $\alpha\in(0,1)$, let
\begin{equation}
    q_\alpha := S_{(k^\star)}, \qquad
    k^\star := \left\lceil (M+1)(1-\alpha)\right\rceil,
    \label{eq:qalpha-rollout}
\end{equation}
which is the standard split-conformal quantile \cite{ShaferVovkJMLR2008,AngelopoulosBatesFnT2023}.
Under Assumption~\ref{ass:exchangeability}, the conformal guarantee is finite-sample for any calibration size $M$, but smaller $M$ generally yields a coarser and often more conservative estimate of $q_\alpha$.

For each sampling time $t_k$, define
\begin{equation}
    E_{\alpha,k} := \{\varepsilon\in\mathbb{R}^n:\ \|\varepsilon\|\le q_\alpha\}.
    \label{eq:Ealpha-def}
\end{equation}
Then, under Assumption~\ref{ass:exchangeability},
\begin{equation}
    \mathbb{P}\Big(\varepsilon_k\in E_{\alpha,k}\ \ \forall k=0,\dots,N\Big)\ge 1-\alpha.
    \label{eq:cp-coverage-grid}
\end{equation}
That is, with probability at least $1-\alpha$, the estimation error stays within radius $q_\alpha$ at all sampling times. For later use, we define
\begin{equation}
    r_\alpha := q_\alpha.
    \label{eq:ralpha}
\end{equation}

\begin{remark}[Error dependence]
\label{rem:error-dependence}
The estimation error $\varepsilon_k=x(t_k)-\hat x(t_k)$ may depend on the
state and trajectory history; independence within a rollout is not
required. The score in \eqref{eq:rollout-score} treats the complete error
trajectory as one sample. Thus, \eqref{eq:cp-coverage-grid} is uniform
over the sampling grid but marginal over exchangeable rollouts, rather
than conditional on a particular state or trajectory.
\end{remark}

%%%%%%%%%%%
\subsection{Continuous-Time Error Propagation Between Samples}
\label{subsec:ct-prop}

The guarantee \eqref{eq:cp-coverage-grid} is only on the sampling grid. To obtain continuous-time safety over $[t,0]$, we propagate the bound between samples.

\begin{assumption}[Error growth bound]
\label{ass:error-growth}
There exist constants $L_\varepsilon\ge 0$ and $\bar w\ge 0$ such that
\begin{equation}
    \|\dot\varepsilon(s)\| \le L_\varepsilon \|\varepsilon(s)\| + \bar w,
    \qquad \forall s\in[t,0].
    \label{eq:error-diff-ineq}
\end{equation}
Such a bound can be derived on a compact region from the Lipschitzness of the plant and estimator dynamics together with bounded disturbances and noise.
\end{assumption}

Under Assumption~\ref{ass:error-growth}, Gr\"onwall's inequality gives, for any $s\in[t_k,t_{k+1}]$,
\begin{equation}
\|\varepsilon(s)\|
\le
\|\varepsilon(t_k)\|\,e^{L_\varepsilon (s-t_k)}
+\frac{\bar w}{L_\varepsilon}\Big(e^{L_\varepsilon (s-t_k)}-1\Big),
\label{eq:gronwall-bound}
\end{equation}
with the usual interpretation that $\frac{\bar w}{L_\varepsilon}(e^{L_\varepsilon \tau}-1)=\bar w\,\tau$ when $L_\varepsilon=0$ \cite{KhalilNonlinearSystems}.

On the event $\varepsilon(t_k)\in E_{\alpha,k}$, we define the continuous-time error radius as
\begin{equation}
r_{\alpha}^{\mathrm{ct}}(s;t_k)
:=
r_{\alpha}e^{L_\varepsilon (s-t_k)}
+\frac{\bar w}{L_\varepsilon}\Big(e^{L_\varepsilon (s-t_k)}-1\Big),
 s\in[t_k,t_{k+1}],
\label{eq:rct}
\end{equation}
and the propagated error set
\begin{equation}
E_{\alpha}^{\mathrm{ct}}(s;t_k)
:=
\{\varepsilon\in\mathbb{R}^n:\ \|\varepsilon\|\le r_{\alpha}^{\mathrm{ct}}(s;t_k)\}.
\label{eq:Ect}
\end{equation}

\begin{remark}[Sampling interval and conservatism]
\label{rem:sampling-conservatism}
By \eqref{eq:rct}, the propagated radius is largest at $t_{k+1}$ and
increases with $L_\varepsilon$ and $\Delta t=t_{k+1}-t_k$ through
$e^{L_\varepsilon\Delta t}$; for $L_\varepsilon=0$, it grows linearly
with $\Delta t$. Thus, one should use a tight valid $L_\varepsilon$ and
the largest $\Delta t$ yielding an acceptable endpoint radius. Smaller
intervals tighten \eqref{eq:Ect} but increase sampling, estimation, and
calibration costs. Changing $\Delta t$ requires recalibration of
$r_\alpha$ on the new grid.
\end{remark}

%%%%%%
\subsection{Bounding the Lumped Disturbance Set \texorpdfstring{$\hat D(s)$}{Dhat(s)}}
\label{subsec:dhat}

We now convert the estimation error bound into a disturbance bound for the estimator-space dynamics.

\begin{assumption}[Lipschitz sensitivity]
\label{ass:lipschitz}
There exists $L_{\mathrm{est},y}>0$ such that, for all $s\in[t,0]$,
\begin{equation}
    \|\hat d(s)\|\le L_{\mathrm{est},y}\,\|\varepsilon_y(s)\|,
    \label{eq:est-lip}
\end{equation}
where
\[
\varepsilon_y(s):=y(s)-\hat y(s),
\qquad
\hat y(s):=h_y(\hat x(s)).
\]
\end{assumption}

The constant $L_{\mathrm{est},y}$ may be chosen as an offline upper bound on the gain from output error to the lumped disturbance in the estimator dynamics.

For $s\in[t_k,t_{k+1}]$, if $\varepsilon(s)\in E_{\alpha}^{\mathrm{ct}}(s;t_k)$ then $\|\varepsilon(s)\|\le r_{\alpha}^{\mathrm{ct}}(s;t_k)$. Since $h_y$ is Lipschitz on the region of interest,
\[
\|h_y(x(s))-h_y(\hat x(s))\|
\le L_{h_y}\|\varepsilon(s)\|
\le L_{h_y}r_{\alpha}^{\mathrm{ct}}(s;t_k),
\]
where $L_{h_y}$ is a Lipschitz constant of $h_y$. Using \eqref{eq:output}, this gives
\[
\|\varepsilon_y(s)\|
\le L_{h_y}r_{\alpha}^{\mathrm{ct}}(s;t_k)+\bar v.
\]
Combining with \eqref{eq:est-lip}, we obtain
\begin{equation}
    \|\hat d(s)\|
    \le
    L_{\mathrm{est},y}\big(L_{h_y}r_{\alpha}^{\mathrm{ct}}(s;t_k)+\bar v\big)
    =: r_{\hat d}(s;t_k).
    \label{eq:rdhat}
\end{equation}
We therefore define
\begin{equation}
    \hat D(s;t_k)
    :=
    \{\hat d\in\mathbb{R}^n:\ \|\hat d\|\le r_{\hat d}(s;t_k)\}.
    \label{eq:dhat-set}
\end{equation}
For notational convenience, let
\begin{equation}
\hat D(s):=\hat D(s;t_k),\qquad s\in[t_k,t_{k+1}].
\label{eq:dhat-piecewise}
\end{equation}

\begin{remark}[Estimator Bounds and Conservatism]
\label{rem:estimator-bounds}
The estimator model in \eqref{eq:estimator-dynamics}, the error-growth
bound in \eqref{eq:error-diff-ineq}, and the estimator-sensitivity bound
in \eqref{eq:est-lip} can be derived analytically
\cite{KhalilNonlinearSystems} or estimated conservatively from offline
data using bounded-error methods \cite{MilaneseVicino1991}. The
guarantee requires their validity. Larger bounds enlarge
\eqref{eq:Ect} and \eqref{eq:dhat-set}, tighten
\eqref{eq:po-tightened-safety}, and shrink the certified safe set.
\end{remark}

Let $\hat D[t,0]$ denote the set of measurable signals $\hat d:[t,0]\to\mathbb{R}^n$ such that $\hat d(s)\in\hat D(s)$ almost everywhere. As in the full-state case, the estimator disturbance acts through a non-anticipative strategy $\hat\xi_{\hat d}:U[t,0]\to\hat D[t,0]$. We denote the set of all such strategies by $\hat\Xi[t,0]$

%%%%%%%%
\subsection{Tightened Safety Function on Estimator State}
\label{subsec:tighten}

Given the propagated error set \eqref{eq:Ect}, we define the tightened safety margin
\begin{equation}
    \ell_\alpha(\hat x(s),s;t_k)
    :=
    \min_{\varepsilon \in E_{\alpha}^{\mathrm{ct}}(s;t_k)}
    l\big(\hat x(s)+\varepsilon\big),
    \qquad s\in[t_k,t_{k+1}],
    \label{eq:po-tightened-safety}
\end{equation}
where $l:\mathbb{R}^n\to\mathbb{R}$ is the safety function defining the safe set $\mathcal{L}$ in \eqref{eq:safe-set}. Thus, $\ell_\alpha(\hat x(s),s;t_k)\ge 0$ means that every state consistent with the propagated error tube around $\hat x(s)$ is safe at time $s$.

\begin{remark}[Obstacle inflation interpretation]
\label{rem:obstacle-inflation}
When the safety function $l$ encodes obstacle avoidance, the tightening in \eqref{eq:po-tightened-safety} effectively inflates the obstacle by the propagated error tube, enforcing safety against a conservative boundary that accounts for estimation uncertainty.
\end{remark}

\begin{lemma}[Correctness of tightening]
\label{lem:tightening}
Fix $s\in[t_k,t_{k+1}]$. If $x(s)=\hat x(s)+\varepsilon(s)$ with $\varepsilon(s)\in E_{\alpha}^{\mathrm{ct}}(s;t_k)$, then
\[
l(x(s))\ge \ell_\alpha(\hat x(s),s;t_k).
\]
\end{lemma}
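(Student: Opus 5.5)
The argument is a direct ``minimum is a lower bound'' observation, with one preliminary check that the minimum in \eqref{eq:po-tightened-safety} is well defined. Fix $s\in[t_k,t_{k+1}]$ and abbreviate $\hat x:=\hat x(s)$, $\varepsilon^\star:=\varepsilon(s)$, and $r:=r_\alpha^{\mathrm{ct}}(s;t_k)$.

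First, confirm that the minimum is attained. The radius $r$ in \eqref{eq:rct} is finite and nonnegative: $r_\alpha=q_\alpha\ge 0$ is a norm, and for $s\ge t_k$ both terms in \eqref{eq:rct} are nonnegative, with the $L_\varepsilon=0$ convention used when needed. Hence $E_\alpha^{\mathrm{ct}}(s;t_k)$ is a closed ball of radius $r$ and is therefore compact. It is also nonempty, since it contains $0$, and by hypothesis it contains $\varepsilon^\star$. The map $\varepsilon\mapsto l(\hat x+\varepsilon)$ is continuous, because $l$ is locally Lipschitz by Assumption~\ref{ass:regularity}. By the Weierstrass theorem the minimum in \eqref{eq:po-tightened-safety} exists and is finite. (If one prefers not to rely on attainment, reading the minimum as an infimum gives the same argument.)

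Second, compare $l(x(s))$ with that minimum. By hypothesis $x(s)=\hat x+\varepsilon^\star$ and $\varepsilon^\star\in E_\alpha^{\mathrm{ct}}(s;t_k)$, so $\varepsilon^\star$ is one admissible point in the minimization. A minimum over a set is bounded above by the value at any element of the set, which gives
\[
\ell_\alpha(\hat x,s;t_k)=\min_{\varepsilon\in E_\alpha^{\mathrm{ct}}(s;t_k)} l(\hat x+\varepsilon)\le l(\hat x+\varepsilon^\star)=l(x(s)).
\]
This is exactly the claimed inequality.

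There is no real obstacle here. The only point that needs care is well-posedness of the minimum, namely that the ball is compact and $l$ is continuous, and the first step handles it. The lemma is a deterministic statement for a fixed $s$, so no probabilistic argument is required. The conformal coverage \eqref{eq:cp-coverage-grid} and the Gr\"onwall propagation \eqref{eq:gronwall-bound} only become relevant later, when one needs to ensure the hypothesis $\varepsilon(s)\in E_\alpha^{\mathrm{ct}}(s;t_k)$ holds with probability at least $1-\alpha$.
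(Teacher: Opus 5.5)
Your proposal is correct and follows the paper's proof: $\ell_\alpha(\hat x(s),s;t_k)$ is a minimum over $E_\alpha^{\mathrm{ct}}(s;t_k)$, so evaluating it at the admissible point $\varepsilon(s)$ gives the bound. Your well-posedness step (compact ball, continuous $l$) is not needed for the inequality itself but does no harm. Your fallback to an infimum also covers degenerate cases, such as an infinite conformal radius when $\lceil (M+1)(1-\alpha)\rceil > M$.
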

\begin{proof}
By \eqref{eq:po-tightened-safety}, $\ell_\alpha(\hat x(s),s;t_k)$ is the minimum of $l(\hat x(s)+\varepsilon')$ over all $\varepsilon'\in E_{\alpha}^{\mathrm{ct}}(s;t_k)$. Evaluating this minimum at $\varepsilon'=\varepsilon(s)$ gives the result.
\hfill$\square$
\end{proof}

\begin{assumption}[Regularity of the tightened margin]
\label{ass:tightened}
For each interval $[t_k,t_{k+1}]$, the function $(\hat x,s)\mapsto \ell_\alpha(\hat x,s;t_k)$ is continuous on the region of interest.
\end{assumption}

%%%%%%%%%%%%%%%%%%%%%%%%%%%%%%%

\section{Partially Observable Control Barrier--Value Functions}
\label{sec:control}
In this section, we define the partially observable CBVF and its HJ characterization. Relative to the full-state CBVF \eqref{eq:cbvf-def}, this formulation introduces two changes: it operates in estimator space using \eqref{eq:estimator-dynamics}, and it replaces the nominal safety function $l$ with the tightened margin $\ell_\alpha$ from \eqref{eq:po-tightened-safety}. \textcolor{black}{
This estimator-space choice aligns the safety analysis with the
information available to the controller. The PO-CBVF is computed using
\eqref{eq:estimator-dynamics}, and the resulting safe control condition
is evaluated directly at $\hat x$, while \eqref{eq:Ect} and
\eqref{eq:dhat-piecewise} account for the difference between $\hat x$
and $x$. In contrast, a construction based on the original dynamics and
the state estimate would require enforcing the safety condition over all
true states consistent with $\hat x$ and the error set.
}

\subsection{PO-CBVF Definition}

Using the sampling grid $\{t_k\}_{k=0}^N$ from Section~\ref{sec:po}, we define the piecewise tightened margin
\begin{equation}
\ell_\alpha(\hat x(s),s) := \ell_\alpha(\hat x(s),s;t_k),
\qquad s\in[t_k,t_{k+1}],
\label{eq:ell-piecewise}
\end{equation}
where $\ell_\alpha(\hat x(s),s;t_k)$ is given by \eqref{eq:po-tightened-safety}. Since the propagated error bound is reinitialized at each sampling time, both $\ell_\alpha(\hat x,s)$ and $\hat D(s)$ may be piecewise continuous in time.

\begin{definition}[PO-CBVF]
\label{def:po-cbvf}
Fix $\gamma\ge 0$ and $\alpha\in(0,1)$. The partially observable control barrier--value function is
\begin{equation}
    B_{\gamma,\alpha}(\hat x,t)
    := \min_{\hat \xi_{\hat d} \in \hat \Xi[t,0]}
       \max_{u(\cdot)\in U[t,0]}
       \min_{s\in[t,0]} e^{\gamma (s-t)}\, \ell_\alpha\big(\hat x(s),s\big),
    \label{eq:po-cbvf-def}
\end{equation}
where $\hat x(\cdot)$ evolves according to \eqref{eq:estimator-dynamics} with $\hat d(\cdot)=\hat \xi_{\hat d}[u](\cdot)$.
\end{definition}

\begin{remark}[Reduction to the Standard CBVF]
\label{rem:reduction-standard-cbvf}
Note that if the estimation error bound in \eqref{eq:Ect} vanishes,
i.e., $\hat x=x$, and the estimator dynamics in
\eqref{eq:estimator-dynamics} are the same as the true dynamics in
\eqref{eq:system-dynamics}, then \eqref{eq:po-tightened-safety} gives
$\ell_\alpha(\hat x,s)=l(x(s))$. Therefore, the PO-CBVF in
\eqref{eq:po-cbvf-def} reduces to the standard CBVF in
\eqref{eq:cbvf-def}.
\end{remark}

The dynamic programming principle below leads to the HJI characterization.

\begin{theorem}[Dynamic programming principle for the PO-CBVF]
\label{thm:po-dpp}
Fix $t<t+\delta\le 0$ and suppose $[t,t+\delta]\subseteq [t_k,t_{k+1}]$ for some $k$. Then
{\small
\begin{equation}
\begin{aligned}
B_{\gamma,\alpha}(\hat x,t)
&=
\min_{\hat\xi_{\hat d}\in\hat\Xi[t,0]}
\max_{u(\cdot)\in U[t,0]}
\min\Big\{
\min_{s\in[t,t+\delta]}
e^{\gamma(s-t)}\,\ell_\alpha(\hat x(s),s),\\
&\hspace{3.5cm}
e^{\gamma\delta}
B_{\gamma,\alpha}(\hat x(t+\delta),t+\delta)
\Big\},
\end{aligned}
\label{eq:po-dpp}
\end{equation}
}%
where $\hat x(\cdot)$ evolves according to \eqref{eq:estimator-dynamics} with $\hat d(\cdot)=\hat\xi_{\hat d}[u](\cdot)$.
\end{theorem}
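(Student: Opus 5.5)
The plan is to follow the standard two-inequality proof of the dynamic programming principle for Elliott--Kalton games, as in the full-state CBVF setting of \cite{ChoiCDC2021}, adapted to the time-varying margin $\ell_\alpha(\cdot,s)$ and disturbance set $\hat D(s)$. The first step is an exact splitting of the payoff. For any trajectory starting at $\hat x(t)=\hat x$, the exponential weight factorizes as $e^{\gamma(s-t)}=e^{\gamma\delta}e^{\gamma(s-(t+\delta))}$. Since $e^{\gamma\delta}>0$, it commutes with $\min$. Hence
\[
\min_{s\in[t,0]}e^{\gamma(s-t)}\ell_\alpha(\hat x(s),s)=\min\Big\{\min_{s\in[t,t+\delta]}e^{\gamma(s-t)}\ell_\alpha(\hat x(s),s),\ e^{\gamma\delta}\min_{s\in[t+\delta,0]}e^{\gamma(s-t-\delta)}\ell_\alpha(\hat x(s),s)\Big\}.
\]
The hypothesis $[t,t+\delta]\subseteq[t_k,t_{k+1}]$ means that on the first segment only the $k$-th piece of \eqref{eq:ell-piecewise} and \eqref{eq:dhat-piecewise} is used. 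Since $\ell_\alpha(\cdot,\cdot;t_k)$ is continuous by Assumption~\ref{ass:tightened}, the inner minimum over $[t,t+\delta]$ is attained. The tail term is exactly the payoff of the subgame starting at $(\hat x(t+\delta),t+\delta)$.

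Let $W(\hat x,t)$ denote the right-hand side of \eqref{eq:po-dpp}. I will prove $B_{\gamma,\alpha}\le W+\epsilon$ and $B_{\gamma,\alpha}\ge W-\epsilon$ for arbitrary $\epsilon>0$, using concatenation of controls and strategies.

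For $B_{\gamma,\alpha}\ge W-\epsilon$, fix any $\hat\xi\in\hat\Xi[t,0]$. Choose $u_1$ on $[t,t+\delta]$ that is $\epsilon$-optimal for the $W$-problem against $\hat\xi$. For each resulting endpoint $z=\hat x(t+\delta)$, consider the restricted strategy $\hat\xi^{u_1}[u_2]:=\hat\xi[u_1\oplus u_2]|_{[t+\delta,0]}$. This is nonanticipative and takes values in $\hat D(s)$ almost everywhere, so it is admissible for the subgame. Hence some $u_2$ attains a payoff of at least $B_{\gamma,\alpha}(z,t+\delta)-\epsilon$ against it. The concatenated control $u_1\oplus u_2$, combined with the splitting identity, gives the inequality.

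For $B_{\gamma,\alpha}\le W+\epsilon$, choose $\hat\xi_1$ that is $\epsilon$-optimal for the outer minimization in $W$. For each endpoint $z$, choose $\hat\xi^z$ that is $\epsilon$-optimal for $B_{\gamma,\alpha}(z,t+\delta)$. Define the concatenated strategy
\[
\hat\xi[u](s)=\begin{cases}\hat\xi_1[u](s), & s\in[t,t+\delta],\\ \hat\xi^{\hat x(t+\delta)}\big[u|_{[t+\delta,0]}\big](s), & s>t+\delta.\end{cases}
\]
I then check that $\hat\xi$ is nonanticipative, because $\hat x(t+\delta)$ depends only on $u|_{[t,t+\delta]}$. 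It also respects the pointwise constraint $\hat d(s)\in\hat D(s)$, which is what makes the piecewise-in-time sets harmless. Maximizing over $u$ then yields the bound.

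I expect the main obstacle to be measurability of the selection $z\mapsto\hat\xi^z$ in the second inequality. The resulting concatenated strategy must map measurable controls to measurable disturbance signals. I would handle it in the standard way. First, establish continuity (local Lipschitz) of $B_{\gamma,\alpha}(\cdot,t+\delta)$ in $\hat x$, using Gr\"onwall with Assumption~\ref{ass:estimator} and the continuity of $\ell_\alpha$. Then partition the reachable set at $t+\delta$, which is bounded by compactness of $U$ and $\hat D$, into finitely many Borel cells of small diameter. Use one $\epsilon$-optimal strategy per cell and absorb the modulus of continuity into $\epsilon$. Finally, let $\epsilon\to0$.
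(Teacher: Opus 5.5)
Your proposal is correct and matches the argument the paper has in mind. The paper omits the proof as a straightforward adaptation of the full-state CBVF dynamic programming principle in \cite{ChoiCDC2021}, which is exactly your payoff splitting $e^{\gamma(s-t)}=e^{\gamma\delta}e^{\gamma(s-t-\delta)}$ followed by the two $\epsilon$-optimal concatenation inequalities; the time-varying $\ell_\alpha$ and $\hat D(s)$ cause no extra difficulty because they are indexed by absolute time.

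As a small simplification, the measurability step you flag is not needed for Elliott--Kalton strategies. For each fixed $u$ the endpoint $\hat x(t+\delta)$ is a single point, so your directly concatenated strategy already returns a measurable disturbance signal, and you can drop the Borel partition together with the continuity of $B_{\gamma,\alpha}$ it requires.
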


\noindent
The proof is a straightforward adaptation of the dynamic-programming argument in \cite[Theorem 2]{ChoiCDC2021}, and is omitted for brevity.

%%%%%%%%%%%%

\subsection{HJ Characterization}
Define the estimator-space Hamiltonian as
\[
H(\hat x,p,t) := \max_{u\in U}\min_{\hat d\in \hat D(t)} p\cdot \hat f(\hat x,u,\hat d),
\]
with $p=\nabla_{\hat x}B_{\gamma,\alpha}(\hat x,t)$. 

Adapting the viscosity-solution arguments of \cite{ChoiCDC2021} to the time-varying tightened margin and disturbance set, we find that under Assumptions~\ref{ass:estimator} and \ref{ass:tightened}, and the compactness of $U$ and $\hat D(t)$, the value function \eqref{eq:po-cbvf-def} satisfies the following variational inequality on each interval $[t_k,t_{k+1}]$:
\begin{equation}
\begin{aligned}
 0 = \min&\Big(\ell_\alpha(\hat x,t)-B_{\gamma,\alpha}(\hat x,t),\\
 &\partial_t B_{\gamma,\alpha}(\hat x,t)
\!+\! H(\hat x,\nabla_{\hat x} B_{\gamma,\alpha}(\hat x,t),t)
\!+\! \gamma B_{\gamma,\alpha}(\hat x,t)\Big),
\end{aligned}
\label{eq:po-cbvf-vi}
\end{equation}
with terminal condition
\begin{equation}
 B_{\gamma,\alpha}(\hat x,0) = \ell_\alpha(\hat x,0).
\label{eq:po-cbvf-terminal}
\end{equation}
The inner minimization of the Hamiltonian enforces robustness with respect to the estimator-space disturbance, whose admissible values are given by \eqref{eq:dhat-piecewise}. Across sampling times, consecutive interval problems are linked by the dynamic programming principle.

The corresponding estimator-space safe set is 
\begin{equation}
\mathcal{C}_{\gamma,\alpha}(t) := \{\hat x : B_{\gamma,\alpha}(\hat x,t)\ge 0\}.
\label{eq:chat-safe-set}
\end{equation}

\begin{remark}[Computational scalability]
\label{rem:scalability}
As an HJ-based method, the PO-CBVF inherits the curse of dimensionality
of standard grid-based HJ computations. Since
$x,\hat{x}\in\mathbb{R}^n$, formulating the problem in estimator space
does not increase the dimension of the underlying HJ problem relative
to the full-state CBVF. More scalable approaches to HJ reachability
have been proposed \cite{bansal2021deepreach}; their application to the
PO-CBVF is outside the scope of this paper.
\end{remark}

%%%%%%%%%%
\subsection{PO-CBVF-Induced Control Policy}

At points where $B_{\gamma,\alpha}$ is differentiable, \eqref{eq:po-cbvf-vi} defines the safe control set
\begin{equation}
\begin{aligned}
K_{B_{\gamma,\alpha}}(\hat x,t)
:= \Big\{u\in U : \,\, &\partial_t B_{\gamma,\alpha}(\hat x,t) \\
&+ \min_{\hat d\in \hat D(t)} \nabla_{\hat x} B_{\gamma,\alpha}(\hat x,t)\cdot \hat f(\hat x,u,\hat d)\\
&+ \gamma B_{\gamma,\alpha}(\hat x,t)\ge 0\Big\}.
\end{aligned}
\label{eq:po-kb}
\end{equation}
When $B_{\gamma,\alpha}(\hat x,t) < \ell_\alpha(\hat x,t)$, the second branch of \eqref{eq:po-cbvf-vi} is active and an optimal action can be chosen from
\[
\pi^\star(\hat x,t)\in \arg\max_{u\in U}\min_{\hat d\in \hat D(t)}
\nabla_{\hat x}B_{\gamma,\alpha}(\hat x,t)\cdot \hat f(\hat x,u,\hat d).
\]
When $B_{\gamma,\alpha}(\hat x,t)=\ell_\alpha(\hat x,t)$, any $u\in K_{B_{\gamma,\alpha}}(\hat x,t)$ preserves the barrier-style inequality against all admissible $\hat d\in\hat D(t)$. At nonsmooth points, the same interpretation can be stated using generalized derivatives consistent with the viscosity framework \cite{ChoiCDC2021,CrandallIshiiLions1992}.

%%%%%%
\subsection{High-Probability True-State Safety}
\label{subsec:true-safety}

We now state the main safety guarantee for PO-CBVF filtering.

\begin{theorem}[Finite-horizon high-probability safety of the true state under PO-CBVF filtering]
\label{thm:true-safety}
Consider the partially observable system \eqref{eq:system-dynamics} with measurements \eqref{eq:output} and estimator dynamics \eqref{eq:estimator-dynamics}. Let $B_{\gamma,\alpha}$ be defined by \eqref{eq:po-cbvf-def}, let $\mathcal{C}_{\gamma,\alpha}(t)$ be given by \eqref{eq:chat-safe-set}, and let $K_{B_{\gamma,\alpha}}$ be given by \eqref{eq:po-kb}. Assume $x(t)\in\mathcal{L}$ and $\hat x(t)\in\mathcal{C}_{\gamma,\alpha}(t)$. If
\[
u(s)\in K_{B_{\gamma,\alpha}}(\hat x(s),s)
\quad \text{for a.e. } s\in[t,0],
\]
then the resulting closed-loop true-state trajectory satisfies
\[
\mathbb{P}\big(x(s)\in\mathcal{L}\ \forall s\in[t,0]\big)\ge 1-\alpha.
\]
\end{theorem}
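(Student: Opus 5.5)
The proof splits into a deterministic estimator-space invariance argument and a probabilistic transfer to the true state. Let $\mathcal{E}$ be the event that $\varepsilon_k\in E_{\alpha,k}$ for all $k=0,\dots,N$. By \eqref{eq:cp-coverage-grid} and Assumption~\ref{ass:exchangeability}, $\mathbb{P}(\mathcal{E})\ge 1-\alpha$. It therefore suffices to show that, on $\mathcal{E}$, the deterministic conclusion $x(s)\in\mathcal{L}$ holds for all $s\in[t,0]$. The claimed probability bound then follows by monotonicity of $\mathbb{P}$.

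\textbf{Step 1 (tube and disturbance containment on $\mathcal{E}$).} Fix a realization in $\mathcal{E}$ and an interval $[t_k,t_{k+1}]$. Since $\|\varepsilon(t_k)\|\le q_\alpha=r_\alpha$, the Gr\"onwall bound \eqref{eq:gronwall-bound} under Assumption~\ref{ass:error-growth} gives $\|\varepsilon(s)\|\le r_\alpha^{\mathrm{ct}}(s;t_k)$, i.e.\ $\varepsilon(s)\in E_\alpha^{\mathrm{ct}}(s;t_k)$ for all $s$ in the interval. The chain of inequalities leading to \eqref{eq:rdhat} (Assumptions~\ref{ass:meas} and~\ref{ass:lipschitz}) then yields $\hat d(s)\in\hat D(s)$. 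Hence the actual estimator trajectory is a trajectory of \eqref{eq:estimator-dynamics} driven by an admissible disturbance signal in $\hat D[t,0]$.

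\textbf{Step 2 (forward invariance in estimator space).} Next I show $B_{\gamma,\alpha}(\hat x(s),s)\ge 0$ for all $s\in[t,0]$. Along the closed loop, at differentiability points, $u(s)\in K_{B_{\gamma,\alpha}}(\hat x(s),s)$ together with $\hat d(s)\in\hat D(s)$ gives
\[
\tfrac{d}{ds}B_{\gamma,\alpha}(\hat x(s),s)\ge \partial_s B_{\gamma,\alpha}+\min_{\hat d\in\hat D(s)}\nabla_{\hat x}B_{\gamma,\alpha}\cdot\hat f \ge -\gamma B_{\gamma,\alpha}(\hat x(s),s).
\]
A comparison (Gr\"onwall) argument then gives $B_{\gamma,\alpha}(\hat x(s),s)\ge e^{-\gamma(s-t)}B_{\gamma,\alpha}(\hat x(t),t)\ge 0$. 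This step is the main obstacle. $B_{\gamma,\alpha}$ is only a viscosity solution and may be nonsmooth, and it may jump in time at the sampling instants $t_k$ where $\ell_\alpha$ and $\hat D$ are reinitialized. I would handle this in two parts. First, I would apply the argument interval by interval on $[t_k,t_{k+1}]$, interpreting the derivative in the generalized/Dini sense consistent with the viscosity framework, as the paper already does for $K_{B_{\gamma,\alpha}}$ at nonsmooth points, and as in \cite{ChoiCDC2021}. Second, I would link consecutive intervals through Theorem~\ref{thm:po-dpp}, which keeps the value consistent across $t_k$. Alternatively, I would argue directly from \eqref{eq:po-dpp}: a nonnegative value means the controller can keep the discounted tightened margin nonnegative against all admissible strategies, in particular against the realized $\hat d$.

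\textbf{Step 3 (transfer to the true state).} From the first branch of \eqref{eq:po-cbvf-vi}, $\ell_\alpha(\hat x,s)\ge B_{\gamma,\alpha}(\hat x,s)$; this also follows directly from \eqref{eq:po-cbvf-def} by taking the inner minimum at $s=t$. Hence $\ell_\alpha(\hat x(s),s)\ge 0$ for all $s\in[t,0]$. By Step~1 and Lemma~\ref{lem:tightening}, $l(x(s))\ge\ell_\alpha(\hat x(s),s;t_k)\ge 0$, so $x(s)\in\mathcal{L}$ on all of $[t,0]$ on the event $\mathcal{E}$. The hypothesis $x(t)\in\mathcal{L}$ is consistent with this and covers the initial instant. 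This completes the plan.
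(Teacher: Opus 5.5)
Your proposal is correct and follows essentially the same route as the paper's proof: the conformal event on the sampling grid, then nonnegativity of $B_{\gamma,\alpha}$ along the estimator trajectory via the condition defining $K_{B_{\gamma,\alpha}}$, then $\ell_\alpha\ge B_{\gamma,\alpha}\ge 0$ combined with the Gr\"onwall tube and Lemma~\ref{lem:tightening}. Your ordering is slightly more careful than the paper's, which asserts the invariance step before restricting to $\mathcal{E}_{\mathrm{grid}}$ even though the fact $\hat d(s)\in\hat D(s)$ it relies on holds only on that event; however, drop your ``alternative'' DPP argument, since a nonnegative value only certifies that \emph{some} control keeps the margin nonnegative, not that an arbitrary selection $u(s)\in K_{B_{\gamma,\alpha}}(\hat x(s),s)$ does.
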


\begin{proof}
Let $\mathcal{E}_{\mathrm{grid}} := \{\varepsilon(t_k)\in E_{\alpha,k}\ \forall k=0,\dots,N\}$. By \eqref{eq:cp-coverage-grid}, $\mathbb{P}(\mathcal{E}_{\mathrm{grid}})\ge 1-\alpha$.

Since $u(s)\in K_{B_{\gamma,\alpha}}(\hat x(s),s)$ for a.e.\ $s\in[t,0]$ and $B_{\gamma,\alpha}(\hat x(t),t)\ge 0$, the barrier-style invariance argument implied by \eqref{eq:po-kb} gives $B_{\gamma,\alpha}(\hat x(s),s)\ge 0$ for all $s\in[t,0]$. Then \eqref{eq:po-cbvf-vi} implies $\ell_\alpha(\hat x(s),s)\ge 0$ for all $s\in[t,0]$.

On $\mathcal{E}_{\mathrm{grid}}$, Assumption~\ref{ass:error-growth} and \eqref{eq:gronwall-bound} imply $\varepsilon(s)\in E_{\alpha}^{\mathrm{ct}}(s;t_k)$ for each $s\in[t_k,t_{k+1}]$. Hence, Lemma~\ref{lem:tightening} gives
\[
l(x(s)) = l(\hat x(s)+\varepsilon(s)) \ge \ell_\alpha(\hat x(s),s;t_k) = \ell_\alpha(\hat x(s),s) \ge 0
\]
for all $s\in[t,0]$. Therefore, $x(s)\in\mathcal{L}$ for all $s\in[t,0]$ on $\mathcal{E}_{\mathrm{grid}}$, which implies $\mathbb{P}(x(s)\in\mathcal{L}\ \forall s\in[t,0])\ge \mathbb{P}(\mathcal{E}_{\mathrm{grid}})\ge 1-\alpha$.
\hfill$\square$
\end{proof}

\begin{remark}[Obstacle-avoidance interpretation]
\label{rem:true-safety-obstacle}
In the obstacle-avoidance setting, the safe set $\mathcal{L}$ is the complement of the physical obstacle region. The tightened margin $\ell_\alpha$ in \eqref{eq:po-tightened-safety} accounts for obstacle inflation induced by the propagated conformal prediction error bound, so that safety is enforced conservatively in estimator space. Therefore, Theorem~\ref{thm:true-safety} implies that, with probability at least $1-\alpha$, the true state avoids the physical obstacle over the entire horizon $[t,0]$. Equivalently, the probability that the true state enters the physical obstacle at some time in $[t,0]$ is at most $\alpha$.
\end{remark}

%%%%%%%%%%%%
\subsection{QP-Based Safety Control Synthesis}
\label{subsec:po-cbvf-qp}

For estimator dynamics affine in $u$ and $\hat d$,
\begin{equation}
\dot{\hat x} = \hat p(\hat x) + \hat q(\hat x)\,u + \hat r(\hat x)\,\hat d,
\label{eq:est-affine}
\end{equation}
a least-restrictive safe input is obtained by filtering a nominal input $u_{\mathrm{nom}}(\hat x,s)$ through the following quadratic program:
\begin{equation}
\begin{aligned}
\min_{u \in U}\quad & \|u - u_{\mathrm{nom}}(\hat x,s)\|^2 \\
\text{s.t.}\quad
& \partial_t B_{\gamma,\alpha}(\hat x,s)
+ \min_{\hat d \in \hat D(s)}
\nabla_{\hat x} B_{\gamma,\alpha}(\hat x,s)\cdot  \\
& \big(\hat p(\hat x)+\hat q(\hat x)u+\hat r(\hat x)\hat d\big)
+ \gamma B_{\gamma,\alpha}(\hat x,s)\ge 0.
\end{aligned}
\label{eq:po-cbvf-qp}
\end{equation}
This enforces $u(s)\in K_{B_{\gamma,\alpha}}(\hat x(s),s)$ whenever $B_{\gamma,\alpha}$ is differentiable. If $\hat D(s)$ has convenient structure, such as a polytope or norm ball, the inner minimization can often be evaluated in closed form.

\begin{remark}[Offline and online computations]
\label{rem:offline-online}
Note that, online, measurements and the applied input are used to update
$\hat x$, evaluate the precomputed PO-CBVF, and solve
\eqref{eq:po-cbvf-qp}; the true state $x$ is not required. Offline, the
scores and radius in \eqref{eq:rollout-score}--\eqref{eq:qalpha-rollout}
are computed using stored full states $x_k^{(j)}$ from calibration
rollouts and are used to construct the error and disturbance sets and
the PO-CBVF.
\end{remark}

\begin{remark}[QP Feasibility]
\label{rem:qp-feasibility}
At differentiable points in the PO-CBVF safe set
\eqref{eq:chat-safe-set}, the HJ computation accounts for $U$, making
the safe control set \eqref{eq:po-kb} nonempty and
\eqref{eq:po-cbvf-qp} feasible. Enlarging the error and disturbance sets
in \eqref{eq:Ect} and \eqref{eq:dhat-set} tightens
\eqref{eq:po-tightened-safety} and shrinks this safe set; outside it,
feasibility and safety are not guaranteed. Infeasibility is detected
from the solver status or constraint residual. A penalized slack can
restore feasibility, but a positive slack removes the guarantee of
Theorem~\ref{thm:true-safety}.
\end{remark}

%%%%%%%%%%%%%%

\section{Numerical Results}
\label{sec:numerical}
We evaluate the proposed PO-CBVF on a partially observable obstacle-avoidance problem for a planar double integrator with state $x=(p_x,p_y,v_x,v_y)$:
\begin{equation}
\dot p_x = v_x,\quad
\dot p_y = v_y,\quad
\dot v_x = u_x + d_x,\quad
\dot v_y = u_y + d_y,
\label{eq:num-true-dynamics}
\end{equation}
with control bound $u\in[-4,4]^2$, disturbance bound $d\in[-0.1,0.1]^2$, and circular obstacles of radius $2.0$ located in the state space.

Only noisy position measurements are available online:
\[
y_x = p_x + n_x,\qquad y_y = p_y + n_y,
\]
with $n_x,n_y\in[-0.1,0.1]$. The online estimator is
\begin{equation}
\begin{aligned}
\dot{\hat p}_x &= \hat v_x + L_{\mathrm{gain}}(y_x-\hat p_x),
\dot{\hat p}_y = \hat v_y + L_{\mathrm{gain}}(y_y-\hat p_y),\\
\dot{\hat v}_x &= u_x,
\dot{\hat v}_y = u_y,
\end{aligned}
\label{eq:num-estimator}
\end{equation}
with $L_{\mathrm{gain}}=2.0$. For offline PO-CBVF computation, we use the estimator-space analysis model
\begin{equation}
\dot{\hat p}_x = \hat v_x + \hat d_x,\qquad
\dot{\hat p}_y = \hat v_y + \hat d_y,\qquad
\dot{\hat v}_x = u_x,\qquad
\dot{\hat v}_y = u_y,
\label{eq:num-analysis-model}
\end{equation}
where the lumped disturbance $\hat d$ is bounded using the conformal error propagation in Section~\ref{sec:po}.

The bounds are calibrated from $M=500$ offline rollout trajectories with miscoverage level $\alpha=0.05$. To match the deployment setting, these calibration rollouts are generated on-policy using the same PD nominal controller and QP-based PO-CBVF safety filter used online, with process and measurement noises sampled from bounded uniform distributions. Using the rollout-level score in \eqref{eq:rollout-score} yields the conformal radius $q_\alpha=0.0742$. \textcolor{black}{
Since the radius affects the safety filter, calibration is repeated until
$q_\alpha$ stabilizes. In this case study, the control input cancels
between \eqref{eq:num-true-dynamics} and \eqref{eq:num-estimator}, so the
iteration does not alter the error-score distribution.
} The propagated error bound is then computed using $
L_\varepsilon=0,\qquad \bar w=0.0146,$
and is used to construct the effective obstacle boundary and the estimator-space disturbance bound $\hat D(t)$, with
$
L_{h_y}=1.0,\qquad L_{\mathrm{est},y}=2.0,\qquad \bar v=0.1315.$

The PO-CBVF is computed offline over horizon $T=5$ s with $\Delta t=0.02$ s and $\gamma=1.0$. Online, a PD nominal controller defined on the estimated state $\hat x$ is filtered through the QP in \eqref{eq:po-cbvf-qp}.

Figure~\ref{fig:spatial_trajectory} shows a representative closed-loop rollout. The system passes through the gap between the inflated obstacles and reaches the goal region safely. 
Empirically, safety exceeded the nominal $95\%$ guarantee, resulting from conservatism in the worst-case disturbance model and continuous-time error propagation.
\begin{figure}[htbp]
    \centering
    \IfFileExists{po_spatial_subplots.png}{%
      \includegraphics[width=\linewidth]{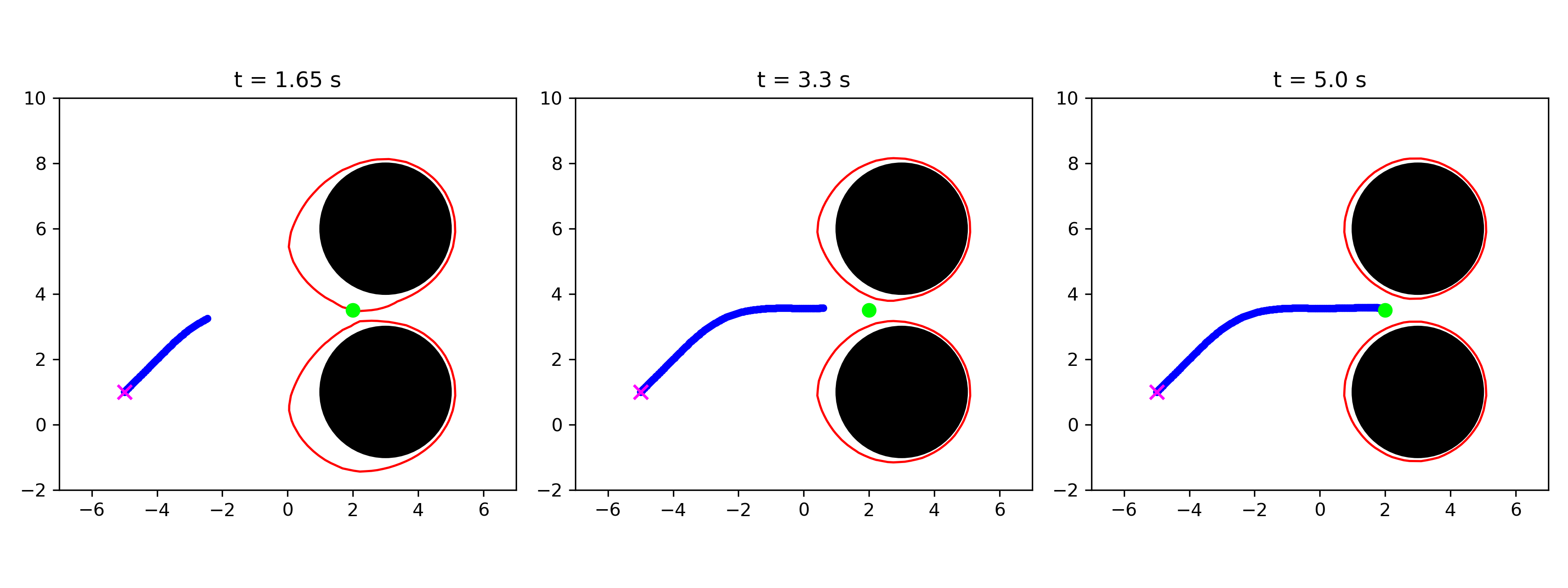}%
    }{%
      \fbox{\parbox[c][0.28\textheight][c]{0.9\linewidth}{\centering
      Add \texttt{po\_spatial\_subplots.png} to the arXiv source package.}}%
    }
 \caption{Representative true-state trajectory under PO-CBVF filtering at three time instants. The red curve shows the zero level set of the 2D spatial slice of $B_{\gamma,\alpha}$ obtained at the current estimated velocity.}
    \label{fig:spatial_trajectory}
\end{figure}

%%%%%%%%%%%%%%%%%%%%%%%

\section{Conclusion}
\label{sec:conclusion}

In this paper, we proposed a control barrier--value function framework for partially observable nonlinear systems. From this construction, we derived a CBVF-based safety certificate for partially observable systems, its Hamilton--Jacobi characterization, and an online QP-based safety filter. This framework establishes a finite-horizon, high-probability safety guarantee for the true system state. We demonstrated the approach on an obstacle-avoidance example, where true-system safety was achieved using only noisy output measurements.

%%%%%%%%%%%%%%%%%%%%%%%%% REFERENCES 
%\printbibliography
\bibliographystyle{unsrt}
\bibliography{ref}

@book{BasarOlsder1999,
  author    = {Ba\c{s}ar, Tamer and Olsder, Geert Jan},
  title     = {Dynamic Noncooperative Game Theory},
  edition   = {2},
  series    = {Classics in Applied Mathematics},
  volume    = {23},
  publisher = {SIAM},
  address   = {Philadelphia, PA},
  year      = {1999},
  doi       = {10.1137/1.9781611971132}
}

@book{Aubin1991,
  author    = {Aubin, Jean-Pierre},
  title     = {Viability Theory},
  publisher = {Birkh{\"a}user},
  address   = {Boston, MA},
  year      = {1991}
}

@inproceedings{wang2022observercbf,
  author    = {Wang, Yujie and Xu, Xiangru},
  title     = {Observer-based Control Barrier Functions for Safety Critical Systems},
  booktitle = {2022 American Control Conference (ACC)},
  pages     = {709--714},
  year      = {2022},
  doi       = {10.23919/ACC53348.2022.9867262}
}

@article{jahanshahi2023po,
  author  = {Jahanshahi, Niloofar and Zamani, Majid},
  title   = {Synthesis of Controllers for Partially-Observable Systems: A Data-Driven Approach},
  journal = {IFAC-PapersOnLine},
  volume  = {56},
  number  = {2},
  pages   = {5525--5530},
  year    = {2023},
  doi     = {10.1016/j.ifacol.2023.10.391}
}

@article{jahanshahi2020partial,
  author  = {Jahanshahi, Niloofar and Jagtap, Pushpak and Zamani, Majid},
  title   = {Synthesis of Partially Observed Jump-Diffusion Systems via Control Barrier Functions},
  journal = {IEEE Control Systems Letters},
  volume  = {5},
  number  = {1},
  pages   = {253--258},
  year    = {2021},
  doi     = {10.1109/LCSYS.2020.3001562}
}

@inproceedings{yousefi2017outputfeedback,
  author    = {Yousefi, Mahdi and van Heusden, Klaske and Mitchell, Ian M. and Dumont, Guy Albert},
  title     = {Output-Feedback Safety-Preserving Control},
  booktitle = {2017 American Control Conference (ACC)},
  pages     = {2550--2555},
  year      = {2017},
  doi       = {10.23919/ACC.2017.7963336}
}

@inproceedings{laine2020eyesclosed,
  author    = {Laine, Forrest and Chiu, Chih-Yuan and Tomlin, Claire},
  title     = {Eyes-Closed Safety Kernels: Safety of Autonomous Systems Under Loss of Observability},
  booktitle = {Robotics: Science and Systems XVI},
  year      = {2020},
  doi       = {10.15607/RSS.2020.XVI.096}
}

@book{KhalilNonlinearSystems,
  author    = {Khalil, Hassan K.},
  title     = {Nonlinear Systems},
  edition   = {3},
  publisher = {Prentice Hall},
  address   = {Upper Saddle River, NJ},
  year      = {2002}
}

@article{CrandallIshiiLions1992,
  author  = {Crandall, Michael G. and Ishii, Hitoshi and Lions, Pierre-Louis},
  title   = {User's Guide to Viscosity Solutions of Second Order Partial Differential Equations},
  journal = {Bulletin of the American Mathematical Society},
  volume  = {27},
  number  = {1},
  pages   = {1--67},
  year    = {1992},
  doi     = {10.1090/S0273-0979-1992-00266-5}
}

@article{TomlinLygerosSastry2000,
  author  = {Tomlin, Claire J. and Lygeros, John and Sastry, S. Shankar},
  title   = {A Game Theoretic Approach to Controller Design for Hybrid Systems},
  journal = {Proceedings of the IEEE},
  volume  = {88},
  number  = {7},
  pages   = {949--970},
  year    = {2000},
  doi     = {10.1109/5.871303}
}

@article{MitchellBayenTomlin2005,
  author  = {Mitchell, Ian M. and Bayen, Alexandre M. and Tomlin, Claire J.},
  title   = {A Time-Dependent Hamilton--Jacobi Formulation of Reachable Sets for Continuous Dynamic Games},
  journal = {IEEE Transactions on Automatic Control},
  volume  = {50},
  number  = {7},
  pages   = {947--957},
  year    = {2005},
  doi     = {10.1109/TAC.2005.851439}
}

@article{AmesTAC2017,
  author  = {Ames, Aaron D. and Xu, Xiangru and Grizzle, Jessy W. and Tabuada, Paulo},
  title   = {Control Barrier Function Based Quadratic Programs for Safety Critical Systems},
  journal = {IEEE Transactions on Automatic Control},
  volume  = {62},
  number  = {8},
  pages   = {3861--3876},
  year    = {2017},
  doi     = {10.1109/TAC.2016.2638961}
}

@inproceedings{AmesCBFSurvey2019,
  author    = {Ames, Aaron D. and Coogan, Samuel and Egerstedt, Magnus and Notomista, Gennaro and Sreenath, Koushil and Tabuada, Paulo},
  title     = {Control Barrier Functions: Theory and Applications},
  booktitle = {2019 18th European Control Conference (ECC)},
  pages     = {3420--3431},
  year      = {2019},
  doi       = {10.23919/ECC.2019.8796030}
}

@inproceedings{ChoiCDC2021,
  author    = {Choi, Jason J. and Lee, Donggun and Sreenath, Koushil and Tomlin, Claire J. and Herbert, Sylvia L.},
  title     = {Robust Control Barrier--Value Functions for Safety-Critical Control},
  booktitle = {2021 60th IEEE Conference on Decision and Control (CDC)},
  pages     = {6814--6821},
  year      = {2021},
  doi       = {10.1109/CDC45484.2021.9683085}
}

@article{ShaferVovkJMLR2008,
  author  = {Shafer, Glenn and Vovk, Vladimir},
  title   = {A Tutorial on Conformal Prediction},
  journal = {Journal of Machine Learning Research},
  volume  = {9},
  pages   = {371--421},
  year    = {2008}
}

@article{AngelopoulosBatesFnT2023,
  author  = {Angelopoulos, Anastasios N. and Bates, Stephen},
  title   = {Conformal Prediction: A Gentle Introduction},
  journal = {Foundations and Trends in Machine Learning},
  volume  = {16},
  number  = {4},
  pages   = {494--591},
  year    = {2023},
  doi     = {10.1561/2200000101}
}

@article{MilaneseVicino1991,
  author  = {Mario Milanese and Antonio Vicino},
  title   = {Estimation Theory for Nonlinear Models and Set Membership
             Uncertainty},
  journal = {Automatica},
  volume  = {27},
  number  = {2},
  pages   = {403--408},
  year    = {1991},
  doi     = {10.1016/0005-1098(91)90090-O}
}

@article{Gauthier1992,
  author  = {Jean-Paul Gauthier and Hassan Hammouri and Sami Othman},
  title   = {A Simple Observer for Nonlinear Systems: Applications to
             Bioreactors},
  journal = {IEEE Transactions on Automatic Control},
  volume  = {37},
  number  = {6},
  pages   = {875--880},
  year    = {1992},
  doi     = {10.1109/9.256352}
}

@article{KhalilPraly2014,
  author  = {Hassan K. Khalil and Laurent Praly},
  title   = {High-Gain Observers in Nonlinear Feedback Control},
  journal = {International Journal of Robust and Nonlinear Control},
  volume  = {24},
  number  = {6},
  pages   = {993--1015},
  year    = {2014},
  doi     = {10.1002/rnc.3051}
}

@inproceedings{bansal2021deepreach,
  author    = {Somil Bansal and Claire J. Tomlin},
  title     = {{DeepReach}: A Deep Learning Approach to High-Dimensional Reachability},
  booktitle = {2021 IEEE International Conference on Robotics and Automation (ICRA)},
  pages     = {1817--1824},
  year      = {2021},
  doi       = {10.1109/ICRA48506.2021.9561949}
}

\end{document}